\renewcommand{\vec}[1]{\ensuremath{\mathbf{#1}}} % Boldface vectors
\newcommand{\mat}[1]{\ensuremath{\mathbf{#1}}} % Boldface matrices
\newcommand{\T}{\ensuremath{\mathsf{T}}} % Transpose
\newcommand{\expval}[1]{\operatorname{E}[#1]}
\newcommand{\erfc}{\operatorname{erfc}}
\newcommand{\var}[1]{\operatorname{var}[#1]}
\newcommand{\pd}{\ensuremath{p_\mathit{d}}}
\newcommand{\pfa}{\ensuremath{p_\mathit{fa}}}
\newtheorem{proposition}{Proposition}
\theoremstyle{definition}
\newtheorem{definition}[proposition]{Definition}
\theoremstyle{remark}
\newtheorem*{remark}{Remark}
\begin{document}
	
\title{A Family of Neyman-Pearson-Based Detectors for Noise-Type Radars}

\author{
	\IEEEauthorblockN{David Luong,~\IEEEmembership{Graduate Student Member, IEEE}, Bhashyam Balaji,~\IEEEmembership{Senior Member, IEEE}, and \\ Sreeraman Rajan,~\IEEEmembership{Senior Member, IEEE}}
	
	\thanks{D.\ Luong is with Carleton University, Ottawa, ON, Canada K1S 5B6. Email: david.luong3@carleton.ca.}% <-this % stops a space
	\thanks{B.\ Balaji is with Defence Research and Development Canada, Ottawa, ON, Canada K2K 2Y7. Email: bhashyam.balaji@drdc-rddc.gc.ca.}% <-this % stops a space
	\thanks{S.\ Rajan is with Carleton University, Ottawa, ON, Canada K1S 5B6. Email: sreeraman.rajan@carleton.ca.}% <-this % stops a space
}

\maketitle

\begin{abstract}
	We derive a detector that optimizes the target detection performance of any single-input single-output noise radar satisfying the following properties: it transmits Gaussian noise, it retains an internal reference signal for matched filtering, all external noise is additive white Gaussian noise, and all signals are measured using heterodyne receivers. This class of radars, which we call noise-type radars, includes not only many types of standard noise radars, but also a type of quantum radar known as quantum two-mode squeezing radar. The detector, which we derive using the Neyman-Pearson lemma, is not practical because it requires foreknowledge of a target-dependent correlation coefficient that cannot be known beforehand. (It is, however, a natural standard of comparison for other detectors.) This motivates us to study the family of Neyman-Pearson-based detectors that result when the correlation coefficient is treated as a parameter. We derive the probability distribution of the Neyman-Pearson-based detectors when there is a mismatch between the pre-chosen parameter value and the true correlation coefficient. We then use this result to generate receiver operating characteristic curves. Finally, we apply our results to the case where the correlation coefficient is small. It turns out that the resulting detector is not only a good one, but that it has appeared previously in the quantum radar literature.
\end{abstract}

\begin{IEEEkeywords}
	Quantum radar, quantum two-mode squeezing radar, noise radar, Neyman-Pearson target detection
\end{IEEEkeywords}

\section{Introduction}

Quantum radar, particularly the subclass of quantum radars known as \emph{quantum illumination} (QI) radars, has been studied from a theoretical point of view for over a decade now \cite{lloyd2008qi,tan2008quantum,zhuang2017neyman,wilde2017gaussian,zhuang2017optimum,lopaeva2013qi,zhang2014qi,balaji2018qi,torrome2021quantum}. More recently, a renewed wave of interest in quantum radar arose when Wilson and his team announced the results of the world's first experiment to validate the concept of a microwave quantum radar \cite{chang2018quantum,luong2019roc}. (Note that this was not itself a quantum radar because it contained \emph{extra} components: amplifiers that broke the entanglement in the microwave signals.) What was more, the main conclusions of this experiment by Wilson et al.\ were shown to be repeatable by another experimental group \cite{barzanjeh2019experimental}. In particular, \cite{luong2019roc} showed that a quantum radar could potentially reduce the integration time required to detect a target by at least a factor of 8.

The radar in the Wilson experiment was not a standard QI radar because it did not implement the quantum measurements envisioned in \cite{tan2008quantum}. Instead, it used standard heterodyne detectors. For this reason, the radar architecture that was validated by the Wilson experiment was described as a ``quantum-enhanced noise radar'' in \cite{chang2018quantum} and named \emph{quantum two-mode squeezing} (QTMS) radar in \cite{luong2019roc}. These characterizations differentiate QTMS radar from standard QI, but at the same time draw a connection between quantum radars and \emph{noise radars}, which---while somewhat niche---have been studied in the radar literature for a long time \cite{cooper1967random,lukin1998millimeter,narayanan2004design,thayaparan2006noise,lukin2008Ka,tarchi2010SAR,kulpa2013signal,narayanan2016noise,savci2019trials,wasserzier2019noise,savci2020noise}. This connection between noise radar and QTMS radar was fleshed out in later papers from our group \cite{luong2019cov,luong2020int}. In particular, we showed that noise radars and QTMS radars lie on a continuum parameterized by a correlation coefficient $\rho$ that we will describe in the following section.

In this paper, we derive an upper bound for the detection performance of any radar satisfying the following properties: its transmit signal consists of Gaussian noise, it retains a reference signal for matched filtering (assumed to be jointly Gaussian with the transmit signal), all external noise is additive white Gaussian noise, and all signals are measured using heterodyne detectors. We will call such radars \emph{noise-type radars}. QTMS radars fall under this description, as do some types of noise radar \cite{dawood2001roc}. This upper bound takes the form of a detector based on the Neyman-Pearson lemma, which guarantees that no other detector can outperform this one (as quantified by the receiver operating characteristic curve). This detector is only a theoretical construct because it requires \emph{a priori} knowledge of the correlation coefficient $\rho$, which in most cases cannot be known beforehand. It forms, however, a natural basis for comparison with more practical detectors. 

More generally, we consider the family of Neyman-Pearson detectors obtained by varying $\rho$, and explore their statistical properties when there is a mismatch between the $\rho$ chosen for the detector and the true $\rho$. It turns out that the mismatch does not lead to drastically different detection performance. Finally, we consider the case $\rho \to 0$. In this case, the resulting Neyman-Pearson-based detector is not only a viable detector, but it has already been studied previously. This connects the results of this paper to previous work done in \cite{luong2019roc,luong2020simdet,barzanjeh2019experimental}.

\section{Noise Radar Target Detection}
\label{sec:nr_target_det}

QTMS radars and (simple) noise radars generate two electromagnetic signals consisting of Gaussian noise. One signal is transmitted through free space, while the other signal is retained as a reference for matched filtering. In this paper, we will derive an upper bound on the detection performance of such radars. To that end, we now formally state the conditions under which our results are valid:
\begin{itemize}
	\item The radar is a single-input single-output radar.
	\item The transmit and reference signals are jointly Gaussian white noise processes.
	\item The received and reference signals are measured using heterodyne receivers.
	\item All system and external noise, including thermal noise and atmospheric noise, can be modeled as additive white Gaussian noise (AWGN).
\end{itemize}
It is important that we specify the use of heterodyne detection, because standard QI does not use heterodyne receivers \cite{tan2008quantum,guha2009gaussian} and therefore do not fall under the analysis in this paper.

Because each electromagnetic signal can be decomposed into in-phase and quadrature components, the two signals must be described by \emph{four} real-valued time series. We will denote the in-phase and quadrature voltages of the received signal by $I_1[n]$ and $Q_1[n]$, respectively, and the corresponding voltages for the reference signal by $I_2[n]$ and $Q_2[n]$. We will assume the use of digitizers, so the four time series are discrete and indexed by $n$. 

For our purposes, we do not need to introduce any notation for the transmit signal. The reference signal contains all the information about the transmit signal that is available to the radar, and can be thought of as a copy of the transmit signal. Due to quantum noise, this copy is necessarily imperfect, and this imperfection becomes prominent at low signal powers. However, the use of quantum entanglement can help mitigate the effect of quantum noise. This is one motivation for quantum radar \cite{luong2020magazine}.

Under the above conditions, $I_1[n]$, $Q_1[n]$, $I_2[n]$, and $Q_2[n]$ are jointly Gaussian with zero mean and are pairwise independent unless the time lag is zero. We will therefore assume the time lag to be zero and drop the index $n$. Because the multivariate Gaussian distribution is fully specified by its mean and covariance matrix, the signals are completely described once we give the $4 \times 4$ covariance matrix $\expval{\vec{x}\vec{x}^\T}$, where $\vec{x} = [I_1, Q_1, I_2, Q_2]^\T$. According to \cite{luong2019cov}, the covariance matrix for QTMS radars is
\begin{equation} \label{eq:QTMS_cov}
	\expval{\vec{x}\vec{x}^\T} =
	\begin{bmatrix}
		\sigma_1^2 \mat{1}_2 & \rho \sigma_1 \sigma_2 \mat{R}'(\phi) \\
		\rho \sigma_1 \sigma_2 \mat{R}'(\phi)^\mathsf{T} & \sigma_2^2 \mat{1}_2
	\end{bmatrix}
\end{equation}
where $\sigma_1^2$ and $\sigma_2^2$ are the received and reference signal powers, respectively, $\rho$ is a coefficient satisfying $0 \leq \rho \leq 1$, $\phi$ is the phase shift between the signals, $\mat{1}_2$ is the $2 \times 2$ identity matrix, and $\mat{R}'(\phi)$ is the reflection matrix 
\begin{equation}
	\mat{R}'(\phi) = 
	\begin{bmatrix}
		\cos \phi & \sin \phi \\
		\sin \phi & -\cos \phi
	\end{bmatrix} \! .
\end{equation}
Standard noise radars are described by the same covariance matrix except that the rotation matrix
\begin{equation}
	\mat{R}(\phi) = 
	\begin{bmatrix}
		\cos \phi & \sin \phi \\
		-\sin \phi & \cos \phi
	\end{bmatrix}
\end{equation}
replaces the reflection matrix.

Because we wish to derive an upper bound on QTMS/noise radar detection performance, we make the following simplifying assumptions:
\begin{itemize}
	\item The phase shift between the received and reference signals is known, and can therefore be set to zero.
	\item The signal powers are assumed known, so we may normalize the signals to have unit power.
\end{itemize}
It should be obvious that the two assumptions made here cannot decrease the detection performance of the radar. Under these assumptions, the covariance matrix takes on the simplified form
\begin{equation} \label{eq:cov_simplified}
	\expval{\vec{x}\vec{x}^\T} = \mat{\Sigma}(\rho) =
	\begin{bmatrix}
		1 & 0 & \rho & 0 \\
		0 & 1 & 0 & \pm\rho \\
		\rho & 0 & 1 & 0 \\
		0 & \pm\rho & 0 & 1 
	\end{bmatrix} \! .
\end{equation}
Note that $\mat{\Sigma}(\rho)$ is a function of the parameter $\rho$ alone. This parameter, which is of great importance for target detection, quantifies the correlation between the transmit and reference signals and will therefore be called simply the \emph{correlation coefficient}. The positive sign in \eqref{eq:cov_simplified} applies to standard noise radars, while the negative sign refers to QTMS radars. The choice of sign does not affect our analysis.

We note here that the true distinction between QTMS radars and standard noise radars lies not in the sign chosen in \eqref{eq:cov_simplified}, but in the magnitude of $\rho$. Simply put, QTMS radars are able to achieve higher values of $\rho$ than is possible using standard radar engineering techniques \cite{luong2020int}. This makes it easier to distinguish between the presence and absence of targets, a problem which we will now consider.

\subsection{Hypothesis Testing}

In the absence of clutter, the received and reference signals will only be correlated if a target is present. Therefore, the noise radar target detection problem requires us to test the following hypotheses:
\begin{equation} \label{eq:hypotheses}
	\begin{alignedat}{3}
		H_0&: \rho = 0 &&\quad\text{Target absent} \\
		H_1&: \rho > 0 &&\quad\text{Target present.}
	\end{alignedat}
\end{equation}
To the best of our knowledge, there exists no optimal detector function for this problem. In particular, the Neyman-Pearson lemma does not apply, nor does the Karlin-Rubin theorem on uniformly most powerful tests \cite{luong2022likelihood}. However, we can arbitrarily choose some fixed value of $\rho$, which we will denote by $\kappa$, and consider the problem of distinguishing the following \emph{simple} hypotheses:
\begin{equation} \label{eq:hypotheses_simple}
	\begin{aligned}
		H_0&: \rho = 0 \\
		H_1&: \rho = \kappa
	\end{aligned}
\end{equation}
In this case, the Neyman-Pearson lemma does apply. And if it so happens that the correlation coefficient for a given target is indeed $\kappa$, the Neyman-Pearson detector will be optimal; this is guaranteed by the Neyman-Pearson lemma. However, the same detector could conceivably be used for targets with different values of $\rho$. The detector will not be optimal for such targets, but it may still be a viable detector. This motivates us to consider the family of Neyman-Pearson-based detectors obtained by treating $\kappa$ as a parameter.

It is important to understand that the hypotheses \eqref{eq:hypotheses_simple} constitute a mathematical abstraction which allows us to derive the family of Neyman-Pearson-based detectors. The hypotheses which must be distinguished by a radar are \eqref{eq:hypotheses}, because $\rho$ cannot be known beforehand. This is because prior knowledge of $\rho$ requires prior knowledge of every factor in the radar range equation \cite{luong2022performance}. Needless to say, it is unreasonable to require radar operators to know the range and the radar cross section of a target before detecting it.

That said, one motivation for analyzing the Neyman-Pearson family of detectors comes from the early QI literature. In order to exploit known results from the theory of quantum information, some of the pioneers in QI implicitly used a hypothesis test of the form \eqref{eq:hypotheses_simple}. In \cite{lloyd2008qi,tan2008quantum,zhuang2017neyman,wilde2017gaussian,zhuang2017optimum}, for example, one of the implicit assumptions is that the path between the radar's transmitter and receiver has a known \emph{transmittance}. In radar engineering terms, the authors assumed that the ratio of receive power to transmit power, $P_r/P_t$, is known beforehand. This is equivalent to \eqref{eq:hypotheses_simple} because the correlation coefficient is a function of the various parameters in the radar range equation, and the relationship between $P_r$ and $P_t$ is, by definition, the radar radar equation. However, some of the foundational results of QI are based on \eqref{eq:hypotheses_simple}, and it would be interesting to see how well these results would hold up when applied to the correct hypotheses \eqref{eq:hypotheses}.

We emphasize that the results in this paper do not apply directly to standard QI radars as described in \cite{lloyd2008qi,tan2008quantum}, because they do not use heterodyne detection and their detector outputs are not in-phase and quadrature voltages. This violates the assumptions listed earlier in this section. However, our results---which do apply to QTMS radar, a variant of QI radar---are indirect evidence that the results of the early QI literature may be partially applicable under the real-life hypotheses \eqref{eq:hypotheses}.

\section{The Neyman-Pearson Detector}

In this section, we will define the Neyman-Pearson detector and establish its basic properties, including its probability density function.

In the following, we use an overline for the sample mean over $N$ samples. For any random variable $Z$, we write
\begin{equation}
	\bar{Z} = \frac{1}{N} \sum_{n=1}^N z[n]
\end{equation}
where $z[1], \dots, z[N]$ are $N$ realizations of $Z$. We will sometimes refer to $N$ as the number of samples integrated by the radar.

\begin{definition}
	The \emph{Neyman-Pearson (NP) detector} is defined as
	\begin{equation} \label{eq:det_NP}
		\bar{D}_\kappa = \bar{D}_0 - \kappa \frac{\bar{P}_\text{tot}}{2}
	\end{equation}
	where
	\begin{subequations}
		\begin{gather}
			\label{eq:P_tot}
			P_\text{tot} \equiv I_1^2 + Q_1^2 + I_2^2 + Q_2^2, \\
			\label{eq:det0}
			D_0 \equiv I_1 I_2 \pm Q_1 Q_2,
		\end{gather}
	\end{subequations}
	and $0 \leq \kappa < 1$. The positive sign is to be used for standard noise radars, and the negative sign for QTMS radars.
\end{definition}

The very name ``Neyman-Pearson detector'' suggests that this detector was constructed to be optimal in the sense of the Neyman-Pearson lemma. We now prove that this is indeed so.

\begin{proposition} \label{prop:NP_optimal}
	The NP detector is optimal for the hypotheses \eqref{eq:hypotheses_simple} in the sense that it maximizes the probability of detection $\pd$ for a given $\pfa$.
\end{proposition}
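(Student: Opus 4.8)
The plan is to apply the Neyman-Pearson lemma directly to the two simple hypotheses in \eqref{eq:hypotheses_simple}. By the lemma, the most powerful test at any given $\pfa$ is the likelihood-ratio test, which rejects $H_0$ when the ratio $\Lambda(\vec{x}_1, \dots, \vec{x}_N) = \prod_n f_1(\vec{x}[n]) / \prod_n f_0(\vec{x}[n])$ exceeds a threshold, where $f_0$ and $f_1$ are the joint densities of the $N$ i.i.d.\ samples under $\rho = 0$ and $\rho = \kappa$ respectively. Since each sample $\vec{x} = [I_1, Q_1, I_2, Q_2]^\T$ is a zero-mean multivariate Gaussian with covariance $\mat{\Sigma}(0) = \mat{1}_4$ under $H_0$ and $\mat{\Sigma}(\kappa)$ under $H_1$, I would write out both multivariate Gaussian densities explicitly and form their ratio.

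First I would take the logarithm of the likelihood ratio, since any strictly increasing transformation of $\Lambda$ yields an equivalent test. Taking $\log$ converts the product over samples into a sum and the Gaussian exponentials into quadratic forms, giving $\log \Lambda = \tfrac{1}{2}\sum_n \vec{x}[n]^\T\bigl(\mat{\Sigma}(0)^{-1} - \mat{\Sigma}(\kappa)^{-1}\bigr)\vec{x}[n] + \tfrac{N}{2}\log\bigl(\det\mat{\Sigma}(0)/\det\mat{\Sigma}(\kappa)\bigr)$. The additive constant involving the determinants can be absorbed into the threshold, so the test depends on the data only through the quadratic-form term. The next step is to invert $\mat{\Sigma}(\kappa)$; because of its block structure in \eqref{eq:cov_simplified} the inverse is explicit, and the off-diagonal blocks carry the factor $\kappa/(1-\kappa^2)$ while the diagonal entries carry $1/(1-\kappa^2)$.

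Then I would carry out the quadratic-form algebra $\vec{x}^\T(\mat{1}_4 - \mat{\Sigma}(\kappa)^{-1})\vec{x}$ and collect terms. The diagonal part produces a multiple of $P_\text{tot} = I_1^2 + Q_1^2 + I_2^2 + Q_2^2$ from \eqref{eq:P_tot}, and the cross terms produce a multiple of $D_0 = I_1 I_2 \pm Q_1 Q_2$ from \eqref{eq:det0}, with the sign $\pm$ emerging from the sign pattern of the off-diagonal blocks of $\mat{\Sigma}(\kappa)$. After summing over $n$ and dividing by $N$, the statistic becomes an affine function of $\bar{D}_0$ and $\bar{P}_\text{tot}$ whose coefficients depend only on $\kappa$. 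Matching this against \eqref{eq:det_NP}, the combination $\bar{D}_0 - \kappa \bar{P}_\text{tot}/2$ appears (up to an overall positive scaling by $1/(1-\kappa^2)$ and the additive constant), so thresholding the log-likelihood ratio is equivalent to thresholding $\bar{D}_\kappa$. Since the Neyman-Pearson lemma guarantees the likelihood-ratio test maximizes $\pd$ for fixed $\pfa$, and $\bar{D}_\kappa$ is a strictly increasing function of $\log\Lambda$, the claim follows.

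The main obstacle I expect is the bookkeeping in inverting $\mat{\Sigma}(\kappa)$ and in verifying that the cross-term coefficient relative to the diagonal coefficient is exactly $-\kappa$ (so that the factor of $2$ and the $\kappa$ in \eqref{eq:det_NP} come out correctly), rather than some other function of $\kappa$. A secondary point requiring care is confirming that the overall multiplicative constant in front of the quadratic form is positive, since a positive factor preserves the direction of the inequality in the threshold test, whereas a negative factor would reverse it and invalidate the identification of $\bar{D}_\kappa$ as the test statistic to be thresholded from above.
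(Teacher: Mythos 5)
Your proposal is correct and follows essentially the same route as the paper: both invoke the Neyman--Pearson lemma on the simple hypotheses \eqref{eq:hypotheses_simple}, reduce the Gaussian log-likelihood ratio to an affine combination of $\bar{D}_0$ and $\bar{P}_\text{tot}$, and absorb the constants and the positive scale factor (here $\kappa/(1-\kappa^2) > 0$ for $0 < \kappa < 1$) into the threshold to recover $\bar{D}_\kappa$. Your explicit attention to the sign of the overall multiplicative constant is a worthwhile detail that the paper glosses over.
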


\begin{proof}
	The log-likelihood ratio for the multivariate normal distribution is well known. For $N$ independently drawn samples, the log-likelihood for our particular case is
	\begin{equation} \label{eq:log_like}
		\begin{split}
			\ell(\rho) = -\frac{N}{2} \mleft[ \frac{\bar{P}_\text{tot} - \bar{D}_0 \rho}{1 - \rho^2} + 2 \ln (1 - \rho ^2) + 2 \ln(2 \pi) \mright]
		\end{split}
	\end{equation}
	The log-likelihood ratio is
	\begin{equation} \label{eq:log_like_ratio}
		-2[\ell(0) - \ell(\kappa)] = N \mleft[ \frac{2\bar{D}_0\kappa - \bar{P}_\text{tot}\kappa^2}{1 - \kappa^2} - 2\ln(1 - \kappa^2) \mright]
	\end{equation}
	The Neyman-Pearson lemma \cite[Theorem 8.3.12]{casella2002stat} tells us that, for the simple hypotheses \eqref{eq:hypotheses_simple}, the likelihood ratio test is the most powerful test for a given significance level. This is equivalent to the above statement about $\pd$ and $\pfa$, since $\pd$ is the power of the test and $\pfa$ is its significance level. Hence \eqref{eq:log_like_ratio} is optimal: it maximizes $\pd$ for a given $\pfa$.
	
	Finally, recall that the performance of a test statistic is invariant under strictly increasing transformations, as this amounts to nothing more than a reparameterization of the decision threshold. This allows us to obtain an equivalent test statistic by dropping the $2\ln(1 - \kappa^2)$ term and multiplying by an appropriate constant to obtain \eqref{eq:det_NP}. Therefore, the performance of the NP detector \eqref{eq:det_NP} is the same as that of \eqref{eq:log_like_ratio}, which we have already shown to be optimal.
\end{proof}

\begin{remark}
	This proposition implies that the detection performance of $\bar{D}_\rho$, the NP detector with $\kappa$ set to the correct value of $\rho$, is an upper bound for the detection performance of any radar satisfying the following conditions: the radar transmits a single electromagnetic signal consisting of Gaussian white noise while retaining a copy for matched filtering, all external noise is AWGN, and the radar employs heterodyne detection.
\end{remark}

Though the NP detector is of theoretical interest as a standard with which to compare other detectors, its optimality is of little practical utility. This is because, as discussed above, we cannot assume that we know the value of $\rho$ in advance, so we cannot know which $\kappa$ to choose. This motivates us to study is the performance of the NP detector when $\rho \neq \kappa$. From this viewpoint, $\bar{D}_\kappa$ can be considered as a \emph{family} of Neyman-Pearson-based detectors for \eqref{eq:hypotheses}, where we can choose $\kappa$ to be any desired value. However, there is no guarantee that $\bar{D}_\kappa$ is a good detector for any $\kappa$ when $\rho \neq \kappa$. One goal of this paper is to establish the properties of $\bar{D}_\kappa$ when the Neyman-Pearson lemma does not hold.

\subsection{Distribution of the NP Detector}

We will shortly prove that $\bar{D}_\kappa$ follows a probability distribution known as the \emph{variance-gamma distribution}. Therefore, we now define this distribution and quote some of its most important properties.

\begin{definition} \label{def:VG}
	A random variable $X$ follows a \emph{variance-gamma distribution} if its probability density function has the form
	\begin{align}
		f_\mathit{VG}(x | c, \theta, \sigma, \nu) &= \frac{ 2 \exp[ \theta(x - c)/\sigma^2 ] }{ \sigma \sqrt{2\pi} \nu^{1/\nu} \Gamma(1/\nu) } \mleft( \frac{ |x - c| }{ \sqrt{2\sigma^2/\nu + \theta^2} } \mright)^{ \frac{1}{\nu} - \frac{1}{2} } \nonumber \\
			&\qquad\phantom{=} \times K_{ \frac{1}{\nu} - \frac{1}{2} } \mleft( \frac{ \sqrt{2\sigma^2/\nu + \theta^2} }{\sigma^2} |x - c| \mright)
			\label{eq:VG_pdf}
	\end{align}
	where $c$, $\sigma$, $\theta$, and $\nu$ are real-valued parameters with $\sigma > 0$ and $\nu > 0$. We denote by $K_\alpha(\cdot)$ the modified Bessel function of the second kind of order $\alpha$. In this case, we write $X \sim \mathit{VG}(c, \sigma, \theta, \nu)$.
\end{definition}

Note that the variance-gamma distribution is parameterized in a variety of ways. We have followed the parameterization in \cite[Eq.\ (15)]{seneta2004fitting}. 

We will also need the characteristic function for the variance-gamma distribution. It is given in \cite[Eq.\ (16)]{seneta2004fitting}:
\begin{equation} \label{eq:VG_char_func}
	\varphi_\mathit{VG}(t) = e^{jct} \mleft( 1 - j \theta \nu t + \frac{\sigma^2 \nu}{2} t^2 \mright)^{-\frac{1}{\nu}}.
\end{equation}
It follows immediately that, if $X \sim \mathit{VG}(c, \sigma, \theta, \nu)$,
\begin{subequations}
	\begin{align}
		\label{eq:VG_mean}
		\expval{X} &= c + \theta \\
		\label{eq:VG_var}
		\var{X} &= \sigma^2 + \theta^2 \nu.
	\end{align}
\end{subequations}

With these results in hand, we may now prove the following proposition.

\begin{proposition}
	The NP detector follows a variance-gamma distribution:
	\begin{equation}
		\bar{D}_\kappa \sim \mathit{VG} \mleft( 0, \sqrt{ \frac{2(1-\rho^2)(1-\kappa^2)}{N} }, 2(\rho - \kappa), \frac{1}{N} \mright).
	\end{equation}
\end{proposition}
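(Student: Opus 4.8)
The plan is to compute the characteristic function of $\bar{D}_\kappa$ directly and recognize it as the variance-gamma characteristic function \eqref{eq:VG_char_func}, then invoke the uniqueness theorem for characteristic functions. Since $\bar{D}_\kappa$ is an average of $N$ independent and identically distributed copies of the single-sample statistic $D_\kappa \equiv D_0 - \kappa P_\text{tot}/2$, its characteristic function factors as $\varphi_{\bar{D}_\kappa}(t) = [\varphi_{D_\kappa}(t/N)]^N$. The whole problem therefore reduces to finding $\varphi_{D_\kappa}(t)$ for a single sample.

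First I would exploit the block structure of $\mat{\Sigma}(\rho)$ in \eqref{eq:cov_simplified}: the in-phase pair $(I_1, I_2)$ is independent of the quadrature pair $(Q_1, Q_2)$, so $D_\kappa$ splits as $D_\kappa = D_\kappa^{(I)} + D_\kappa^{(Q)}$ into two independent contributions. For the in-phase block I would diagonalize the $2 \times 2$ covariance by passing to $U = (I_1 + I_2)/\sqrt{2}$ and $W = (I_1 - I_2)/\sqrt{2}$, which are independent zero-mean Gaussians with variances $1+\rho$ and $1-\rho$. In these variables the quadratic form becomes diagonal, $D_\kappa^{(I)} = \tfrac{1}{2}\mleft[(1-\kappa)U^2 - (1+\kappa)W^2\mright]$, so it is a linear combination of two independent chi-squared variables with one degree of freedom. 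Using the elementary characteristic function $(1 - 2jt)^{-1/2}$ of a chi-squared variable and collecting the two resulting linear factors, I would obtain $\varphi_{D_\kappa^{(I)}}(t) = [1 - 2jt(\rho-\kappa) + t^2(1-\rho^2)(1-\kappa^2)]^{-1/2}$.

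The key algebraic point is that the quadrature block contributes an identical factor regardless of the sign in \eqref{eq:det0} and \eqref{eq:cov_simplified}: for QTMS radars the sign change in $Q_1 Q_2$ and the sign change in the correlation coefficient together swap the roles of the two eigenvalues, leaving the characteristic function unchanged. Multiplying the two blocks gives $\varphi_{D_\kappa}(t) = [1 - 2jt(\rho-\kappa) + t^2(1-\rho^2)(1-\kappa^2)]^{-1}$. Replacing $t$ by $t/N$ and raising to the $N$-th power then yields
\[
	\varphi_{\bar{D}_\kappa}(t) = \mleft[ 1 - \frac{2jt(\rho-\kappa)}{N} + \frac{t^2(1-\rho^2)(1-\kappa^2)}{N^2} \mright]^{-N}.
\]
Comparing term by term with \eqref{eq:VG_char_func}, I would read off $c = 0$, $\nu = 1/N$, $\theta\nu = 2(\rho-\kappa)/N$, and $\sigma^2\nu/2 = (1-\rho^2)(1-\kappa^2)/N^2$, which reproduce exactly the claimed parameters.

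I expect the main obstacle to be bookkeeping rather than anything conceptual: correctly tracking the $\pm$ signs through the quadrature block to confirm that both blocks contribute equal factors, and cleanly matching the three coefficients in the characteristic function. An alternative to the explicit diagonalization is to write $D_\kappa = \vec{x}^\T \mat{A} \vec{x}$ and apply the standard formula $\det(\mat{1} - 2jt\,\mat{A}\mat{\Sigma}(\rho))^{-1/2}$ for the characteristic function of a Gaussian quadratic form; the block-diagonal determinant factors into the same two $2 \times 2$ determinants and yields the same result.
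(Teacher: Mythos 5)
Your proposal is correct and follows essentially the same route as the paper: the paper's whitening matrix $\mat{B}$ is exactly your sum/difference change of variables (normalized by $\sqrt{1\pm\rho}$), and both arguments reduce $\bar{D}_\kappa$ to a difference of scaled chi-squared variables with coefficients $c_\pm = (1\pm\rho)(1\mp\kappa)/2$ before matching the characteristic function to \eqref{eq:VG_char_func}. Your per-block $2\times 2$ organization and the explicit sign-tracking for the QTMS case are a clean repackaging of the same computation, and your final characteristic function agrees with \eqref{eq:det_NP_char_func} since $4c_+c_- = (1-\rho^2)(1-\kappa^2)$.
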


\begin{proof}
	First, note that the NP detector \eqref{eq:det_NP} is a quadratic form in $I_1$, $Q_1$, $I_2$, and $Q_2$. More explicitly,
	\begin{equation} \label{eq:diag_form}
		\bar{D}_\kappa = \overline{\vec{x}^\T \mat{A} \vec{x}}
	\end{equation}
	where
	\begin{equation}
		\mat{A} =
		\frac{1}{2}
		\begin{bmatrix}
			-\kappa & 0 & 1 & 0 \\
			0 & -\kappa & 0 & -1 \\
			1 & 0 & -\kappa & 0 \\
			0 & -1 & 0 & -\kappa
		\end{bmatrix} \! .
	\end{equation}
	With an appropriate change of basis $\vec{x} \to \mat{B}\vec{x}$, any quadratic form can be diagonalized via the transformation $\mat{A} \to (\mat{B}^{-1})^\T\mat{A}\mat{B}^{-1}$. Second, by the properties of the multivariate normal distribution, any change of basis such that $\mat{B} \mat{\Sigma}(\rho) \mat{B}^\T = \mat{1}_4$ (a whitening transformation) will result in a random vector whose components are independent and normally distributed. 
	
	Our derivation hinges on the following insight: there exists a whitening matrix $\mat{B}$ which also diagonalizes the quadratic form \eqref{eq:diag_form}. One such matrix is 
	\begin{equation}
		\mat{B} = 
		\frac{1}{\sqrt{2}} \!
		\begin{bmatrix}
			\frac{1}{\sqrt{1 + \rho}} & 0 & \frac{1}{\sqrt{1 + \rho}} & 0 \\
			0 & \frac{1}{\sqrt{1 + \rho}} & 0 & \frac{-1}{\sqrt{1 + \rho}} \\
			\frac{1}{\sqrt{1 - \rho}} & 0 & \frac{-1}{\sqrt{1 - \rho}} & 0 \\
			0 & \frac{1}{\sqrt{1 - \rho}} & 0 & \frac{1}{\sqrt{1 - \rho}}
		\end{bmatrix} \! .
	\end{equation}
	It can be verified that this choice of $\mat{B}$ indeed transforms the covariance matrix $\mat{\Sigma}(\rho)$ into the identity matrix. At the same time, $\mat{B}$ diagonalizes $\mat{A}$ and yields the following expression for $\bar{D}_\kappa$:
	\begin{equation}
		\bar{D}_\kappa = c_+(\overline{Z_1^2} + \overline{Z_2^2}) - c_-(\overline{Z_3^2} + \overline{Z_4^2}).
	\end{equation}
	Here, $\mat{B}\vec{x} = [Z_1, Z_2, Z_3, Z_4]^\T$ are independent standard normal variates and
	\begin{equation}
		c_\pm = \frac{(1 \pm \rho)(1 \mp \kappa)}{2}.
	\end{equation}
	We can simplify this further by noting that the sum of squares of standard normal random variables is chi-squared distributed:
	\begin{equation}
		\bar{D}_\kappa = c_+ U_1 - c_- U_2
	\end{equation}
	where $U_1, U_2 \sim \chi^2_{2N}$. 
	
	The remainder of the proof is straightforward. The characteristic function for a chi-squared random variable with $k$ degrees of freedom is $\varphi_{\chi}(t) = (1 - 2jt)^{-k/2}$. Using the fact that the characteristic function of a linear combination of random variables is the product of rescaled versions of their individual characteristic functions, we find that
	\begin{equation} \label{eq:det_NP_char_func}
		\varphi_{\mathit{NP}}(t) = \mleft( 1 - \frac{2 j (\rho - \kappa)}{N} t + \frac{4 c_+c_-}{N^2} t^2 \mright)^{-N} .
	\end{equation}
	Because characteristic functions uniquely specify their probability distributions, the proposition follows upon comparing \eqref{eq:det_NP_char_func} with \eqref{eq:VG_char_func}.
\end{proof}

\begin{figure}[t]
	\centerline{\includegraphics[width=\columnwidth]{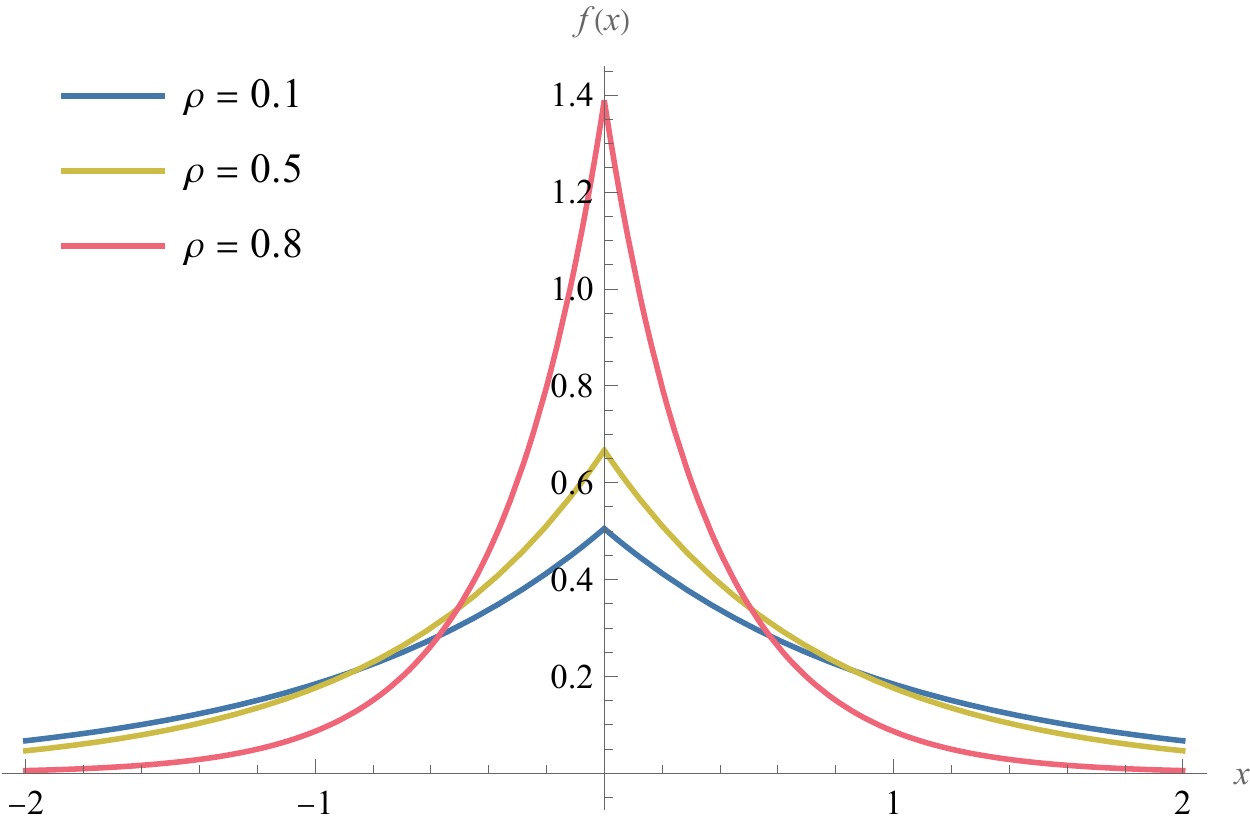}}
	\caption{PDFs of the NP detector when $N = 1$, $\rho \in \{0.1, 0.5, 0.8\}$, and $\kappa = \rho$.}
	\label{fig:PDF_optimal_rho}
\end{figure}

\begin{figure}[t]
	\centerline{\includegraphics[width=\columnwidth]{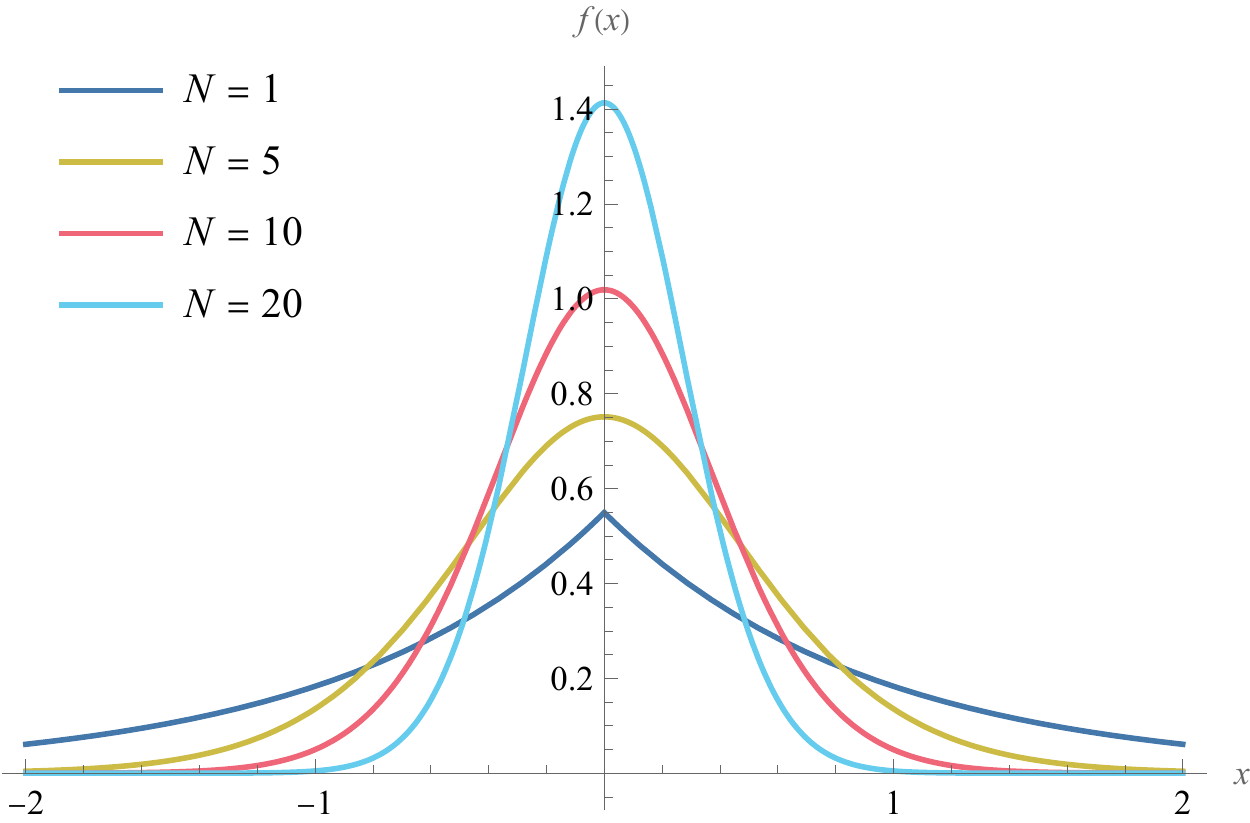}}
	\caption{PDFs of the NP detector when $\kappa = \rho = 0.3$ and $N \in \{1, 5, 10, 20\}$.}
	\label{fig:PDF_optimal_N}
\end{figure}

\begin{figure}[t]
	\centerline{\includegraphics[width=\columnwidth]{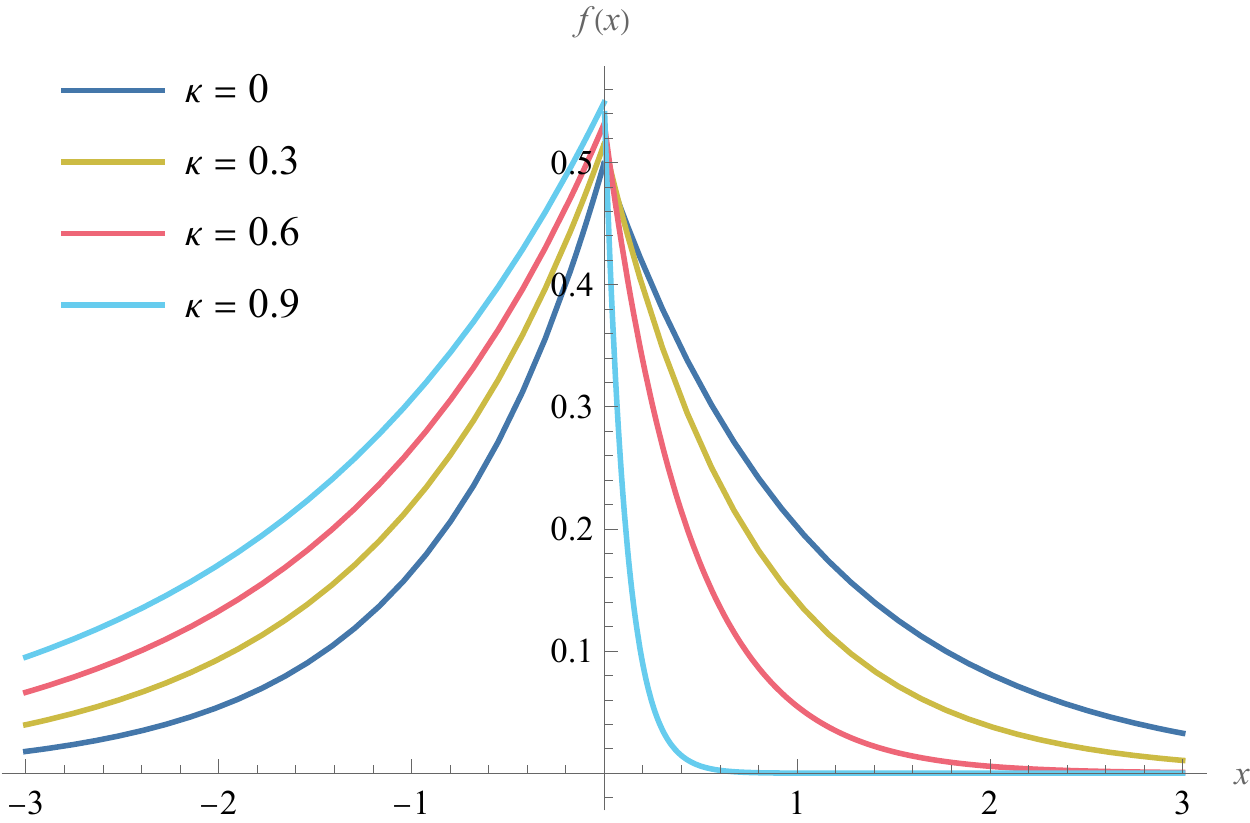}}
	\caption{PDFs of the NP detector when $N = 1$, $\rho = 0.1$, and $\kappa \in \{0, 0.3,\linebreak[1] 0.6, 0.9\}$.}
	\label{fig:PDF_kappa_1}
\end{figure}

\begin{figure}[t]
	\centerline{\includegraphics[width=\columnwidth]{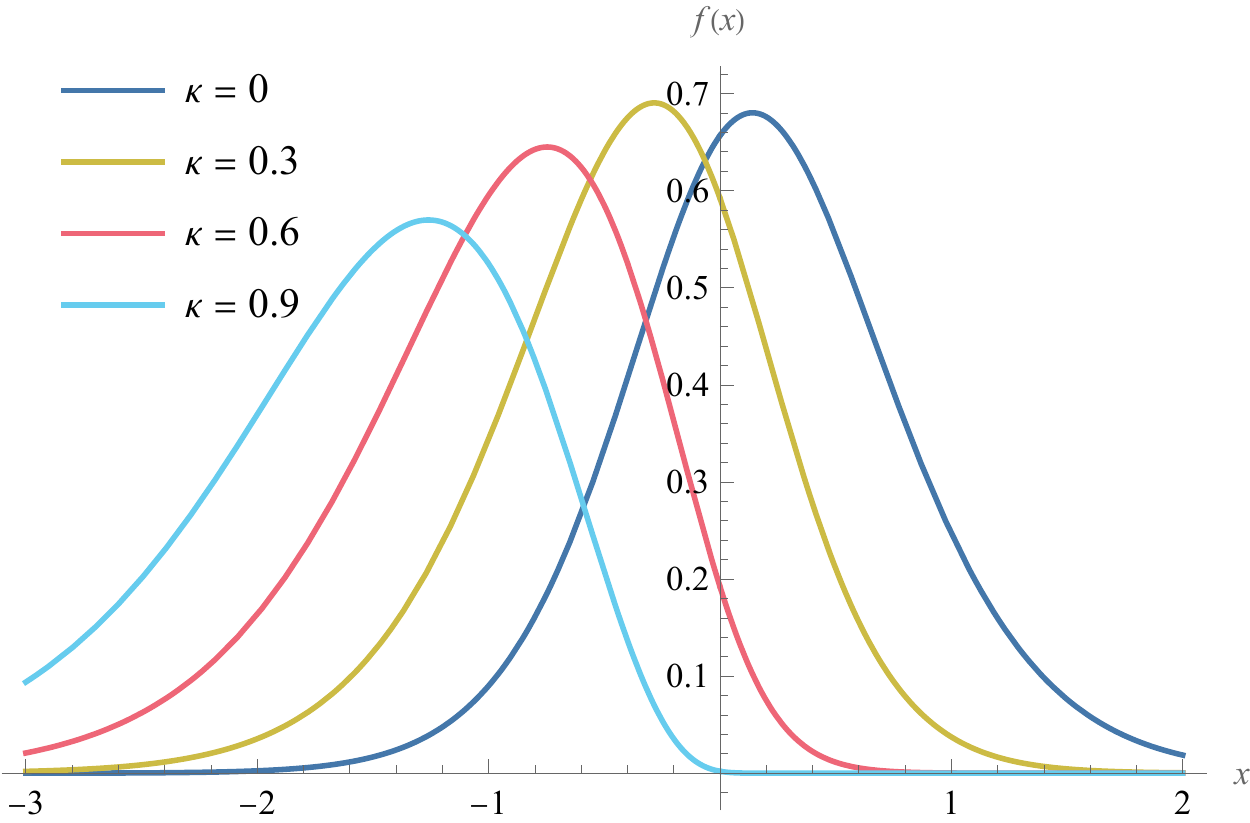}}
	\caption{PDFs of the NP detector when $N = 5$, $\rho = 0.1$, and $\kappa \in \{0, 0.3,\linebreak[1] 0.6, 0.9\}$.}
	\label{fig:PDF_kappa_5}
\end{figure}

Figs.\ \ref{fig:PDF_optimal_rho}--\ref{fig:PDF_kappa_5} show plots of the PDF of the NP detector for various values of $\rho$, $\kappa$, and $N$. In Figs.\ \ref{fig:PDF_optimal_rho} and \ref{fig:PDF_optimal_N}, the optimal case $\kappa = \rho$ is considered. In contrast, Figs.\ \ref{fig:PDF_kappa_1} and \ref{fig:PDF_kappa_5} show what happens when $\rho$ is fixed at 0.1 and $\kappa$ is varied. (The case where $\kappa$ is fixed and $\rho$ results in the same plots, but with a reflection in the $x$ axis.) Interestingly, these plots constitute a striking illustration of the central limit theorem: when $N = 1$, the PDF looks nothing like a normal distribution and even has a discontinuous derivative at $x = 0$, but when $N$ is as low as 5, the PDFs already approach the normal distribution.

\section{Receiver Operating Characteristic Curves}

The gold standard for evaluating radar detection performance is the \emph{receiver operating characteristic} (ROC) curve, which is effectively $\pd$ as a function of $\pfa$. Proposition \ref{prop:NP_optimal} implies that the ROC curve for $\bar{D}_\rho$ is greater than that of $\bar{D}_\kappa$ whenever $\kappa \neq \rho$. This behavior can be seen in Fig.\ \ref{fig:ROC_kappa}, which shows ROC curves for $\rho = 0.3$ and various values of $\kappa$. The plot shows us that the set of ROC curves is indeed bounded from above by the ROC curve for $\kappa = \rho$. However, the ROC curves do not change dramatically as $\kappa$ is varied from 0 to 0.6. This suggests that the optimal detector for the simple hypotheses \eqref{eq:hypotheses_simple} is still a viable detector when applied to the realistic hypotheses \eqref{eq:hypotheses}, even when $\kappa$ is chosen incorrectly.

Another way of seeing this is in Fig.\ \ref{fig:pd_kappa}, which plots $\pd/(\max \pd)$ as a function of $\kappa$; $\max \pd$ is the maximum $\pd$ over all $\kappa$ for fixed values of $\rho$, $N$, and $\pfa$. It is clear from these plots that the maximum $\pd$ occurs, as expected, when $\kappa = \rho$. It also appears that, as $N$ increases, $\pd$ deviates less and less from $\max \pd$.

\begin{figure}[t]
	\centerline{\includegraphics[width=\columnwidth]{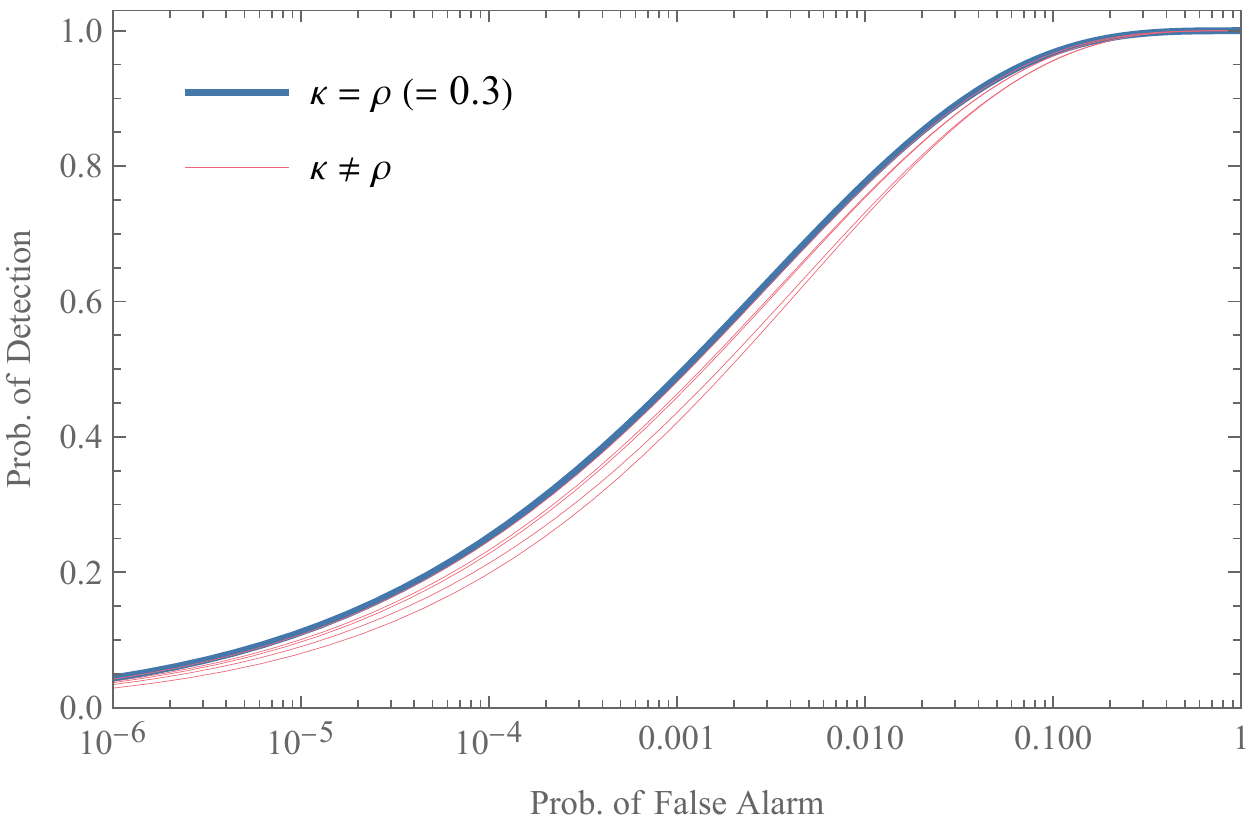}}
	\caption{ROC curves for the NP detector $\bar{D}_\kappa$ when $\rho = 0.3$, $N = 50$, and $\kappa \in \{0, 0.1, 0.2, \dots, 0.6\}$. The ROC curve for $\kappa = \rho = 0.3$ is drawn as a thick blue curve; the other ROC curves are thin red curves.}
	\label{fig:ROC_kappa}
\end{figure}

\begin{figure}[t]
	\centerline{\includegraphics[width=\columnwidth]{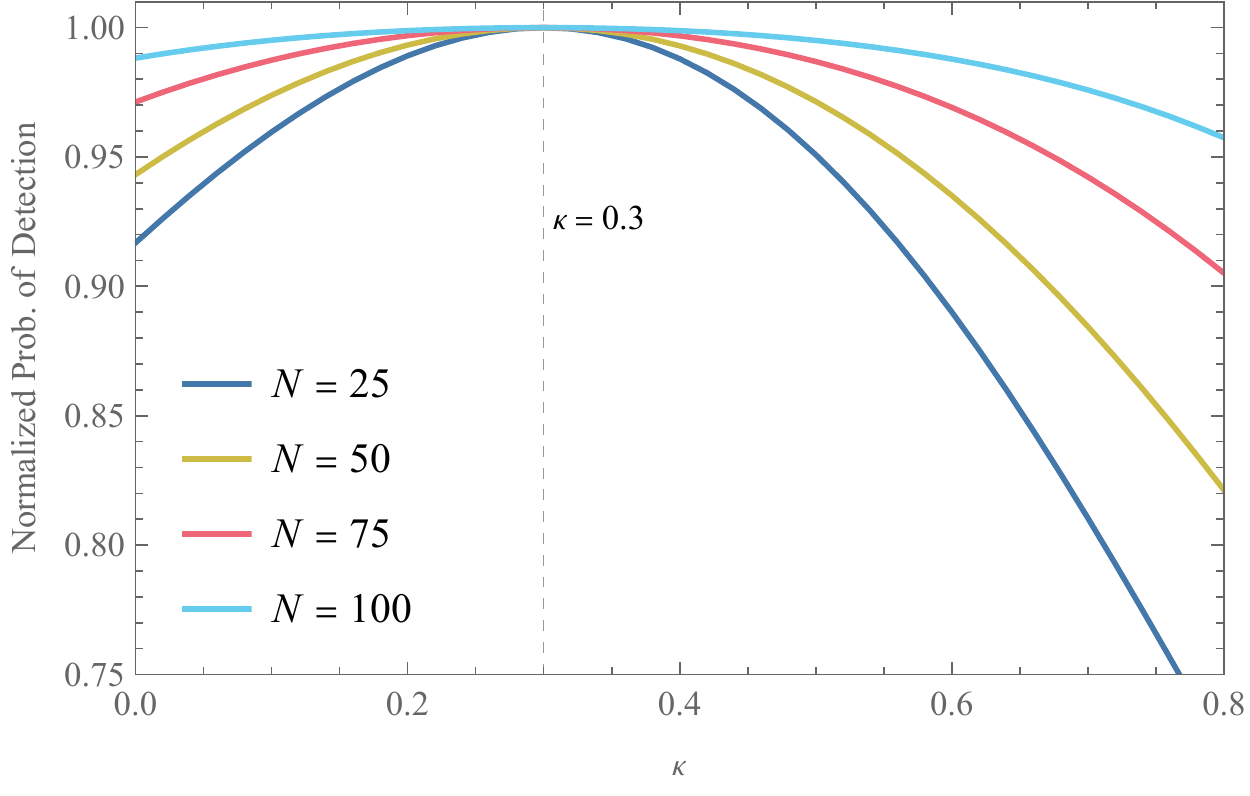}}
	\caption{Normalized probability of detection $\pd/(\max \pd)$ for the NP detector $\bar{D}_\kappa$ as a function of $\kappa$, where $\max \pd$ is the maximum $\pd$ when $\rho$, $N$, and $\pfa$ are fixed. We used $\rho = 0.3$, $\pfa = 10^{-2}$, and $N \in \{25, 50, 75, 100\}$. Note the peaks at $\kappa = \rho$.}
	\label{fig:pd_kappa}
\end{figure}

Because a closed-form expression for the cumulative density function of the variance-gamma distribution is not known, the ROC curves in Fig.\ \ref{fig:ROC_kappa} were generated numerically. When $N$ is large, however, we may obtain an approximate ROC curve formula by invoking the central limit theorem. This leads to the following proposition.

\begin{proposition} \label{prop:ROC_approx}
	In the limit of large $N$, the ROC curve for the NP detector is approximately
	\begin{equation} \label{eq:ROC_approx}
		p_\mathit{d}(p_\mathit{fa}) = \frac{1}{2} \erfc \mleft( \frac{ \sqrt{1 + \kappa^2}\erfc^{-1}(2 p_\mathit{fa}) - \sqrt{N} \rho }{ \sqrt{(\rho - \kappa)^2 + (1 - \rho \kappa)^2} } \mright)
	\end{equation}
	where $\erfc(\cdot)$ is the complementary error function.
\end{proposition}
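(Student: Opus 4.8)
The plan is to invoke the central limit theorem so that the variance-gamma law of $\bar{D}_\kappa$ may be replaced by a Gaussian, after which the ROC follows from the textbook threshold computation for two normal distributions. The Gaussian limit is already visible in the preceding proof: $\bar{D}_\kappa = c_+ U_1 - c_- U_2$ with $U_1, U_2 \sim \chi^2_{2N}$, and a $\chi^2_{2N}$ variate is a sum of $2N$ i.i.d.\ squared standard normals, so each $U_i$—and hence any fixed linear combination of the two—is asymptotically normal as $N \to \infty$. (Equivalently, one may note that the shape parameter $\nu = 1/N$ of the variance-gamma distribution tends to $0$, which is exactly the regime in which this family degenerates to a normal law; or one may take logarithms in the characteristic function \eqref{eq:det_NP_char_func} and observe that, after centering and scaling, it converges to a quadratic in $t$.) For large $N$ we may therefore treat $\bar{D}_\kappa$ as normal with mean $\expval{\bar{D}_\kappa}$ and variance $\var{\bar{D}_\kappa}$ supplied by \eqref{eq:VG_mean}--\eqref{eq:VG_var}.

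Next I would evaluate these two moments under each hypothesis. Inserting the parameters from the previous proposition into $\expval{X} = c + \theta$ and $\var{X} = \sigma^2 + \theta^2\nu$ gives, under $H_1$ (true correlation $\rho$), a mean of $2(\rho - \kappa)$ and a variance of $\tfrac{2}{N}[(\rho-\kappa)^2 + (1-\rho\kappa)^2]$; setting $\rho = 0$ recovers the $H_0$ moments, namely mean $-2\kappa$ and variance $\tfrac{2}{N}(1+\kappa^2)$. For a detector that declares a target whenever $\bar{D}_\kappa > \gamma$, I would first write $\pfa$ as the Gaussian right-tail probability under $H_0$, convert it using the identity $Q(x) = \tfrac{1}{2}\erfc(x/\sqrt{2})$ relating the standard-normal tail to the complementary error function, and invert to obtain the threshold
\begin{equation}
  \gamma = -2\kappa + \frac{2\sqrt{1+\kappa^2}}{\sqrt{N}}\,\erfc^{-1}(2\pfa).
\end{equation}
Substituting this $\gamma$ into the analogous $H_1$ tail probability, $\pd = \tfrac{1}{2}\erfc\bigl( (\gamma - \expval{\bar{D}_\kappa})/(\sqrt{2}\,\sqrt{\var{\bar{D}_\kappa}}) \bigr)$, and simplifying then produces \eqref{eq:ROC_approx}.

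The only step that is not pure bookkeeping is the algebraic identity behind the clean denominator. One must check that $\var{\bar{D}_\kappa}$ under $H_1$, which from \eqref{eq:VG_var} equals $\tfrac{2}{N}[(1-\rho^2)(1-\kappa^2) + 2(\rho-\kappa)^2]$, can be rewritten as $\tfrac{2}{N}[(\rho-\kappa)^2 + (1-\rho\kappa)^2]$; this reduces to the identity $(1-\rho^2)(1-\kappa^2) + (\rho-\kappa)^2 = (1-\rho\kappa)^2$, which follows by direct expansion. Granting this, the mean shift $\gamma - 2(\rho-\kappa) = -2\rho + \tfrac{2\sqrt{1+\kappa^2}}{\sqrt{N}}\erfc^{-1}(2\pfa)$ divided by $\sqrt{2}\,\sqrt{\var{\bar{D}_\kappa}}$ has matching factors of $2$ and $\sqrt{N}$ that cancel, leaving precisely the argument of $\erfc$ displayed in \eqref{eq:ROC_approx}. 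I expect the main point to be conceptual rather than computational: the whole result rests on the asymptotic normality of the variance-gamma family as $\nu = 1/N \to 0$, and care is chiefly needed in tracking the $\sqrt{2}$ and $\sqrt{N}$ factors so that the $\erfc$ argument comes out in the stated normalized form.
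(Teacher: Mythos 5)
Your proposal is correct and follows essentially the same route as the paper: invoke the central limit theorem to replace the variance-gamma law of $\bar{D}_\kappa$ by a Gaussian with mean $2(\rho-\kappa)$ and variance $\tfrac{2}{N}[(\rho-\kappa)^2+(1-\rho\kappa)^2]$, solve the $H_0$ tail probability for the threshold, and substitute into the $H_1$ tail probability. The only difference is cosmetic: you make explicit the identity $(1-\rho^2)(1-\kappa^2)+(\rho-\kappa)^2=(1-\rho\kappa)^2$ that the paper leaves implicit when passing from \eqref{eq:VG_var} to the stated variance, which is a worthwhile clarification but not a different argument.
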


\begin{proof}
	Note that the NP detector is the sample mean of $N$ independent and identically distributed random variables. Therefore, we can invoke the central limit theorem to state that
	\begin{equation}
		\bar{D}_\kappa \sim \mathcal{N} \mleft( 2(\rho - \kappa) , \frac{2(\rho - \kappa)^2 + 2(1 - \rho \kappa)^2}{N} \mright)
	\end{equation}
	in the limit of large $N$. The mean and variance were calculated using \eqref{eq:VG_mean} and \eqref{eq:VG_var}; they could also have been calculated directly from \eqref{eq:det_NP_char_func}.
	
	When no radar target is present, $\rho = 0$. It follows that the probability of false alarm in the large-$N$ limit, given a detection threshold $T$, is
	\begin{equation}
		p_\mathit{fa} = \frac{1}{2} \erfc \mleft( \dfrac{ (T + 2 \kappa) \sqrt{N} }{ 2 \sqrt{1 + \kappa^2} } \mright).
	\end{equation}
	Solving for $T$ yields
	\begin{equation}
		T = 2 \sqrt{\frac{1 + \kappa^2}{N}} \erfc^{-1}(2 p_\mathit{fa}) - 2 \kappa.
	\end{equation}
	Similarly, the probability of detection is
	\begin{equation}
		p_\mathit{d} = \frac{1}{2} \erfc \mleft( \frac{ \sqrt{N}[T - 2(\rho - \kappa)] }{ 2 \sqrt{(\rho - \kappa)^2 + 2(1 - \rho \kappa)^2} } \mright).
	\end{equation}
	Substituting $T$ into this expression gives us \eqref{eq:ROC_approx}.
\end{proof}

\begin{figure}[t]
	\centerline{\includegraphics[width=\columnwidth]{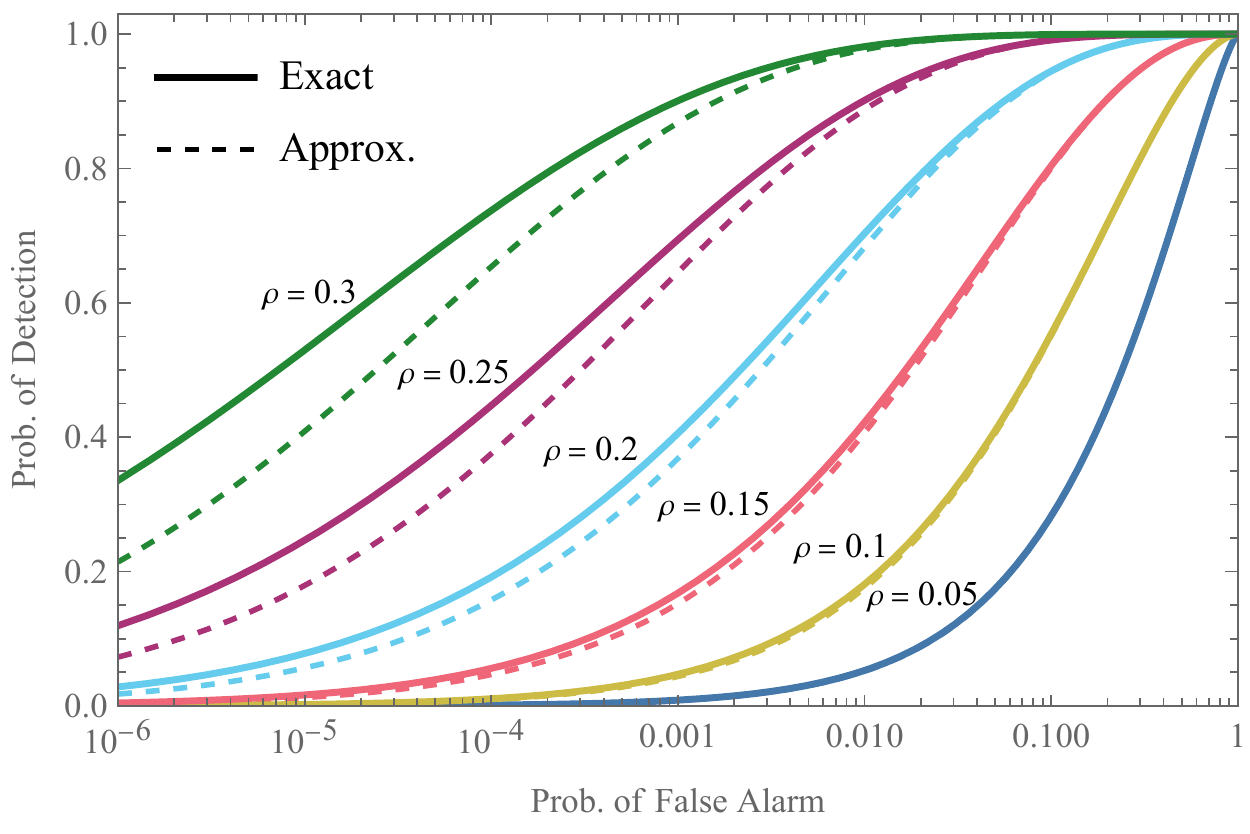}}
	\caption{ROC curves for the NP detector $\bar{D}_\kappa$ when $N = 100$, $\kappa = \rho$, and $\rho \in \linebreak[1] \{0.05, 0.1, \dots, 0.3\}$. Dashed curves indicate approximate ROC curves calculated using \eqref{eq:ROC_approx}.}
	\label{fig:ROC_ideal_rho}
\end{figure}

\begin{figure}[t]
	\centerline{\includegraphics[width=\columnwidth]{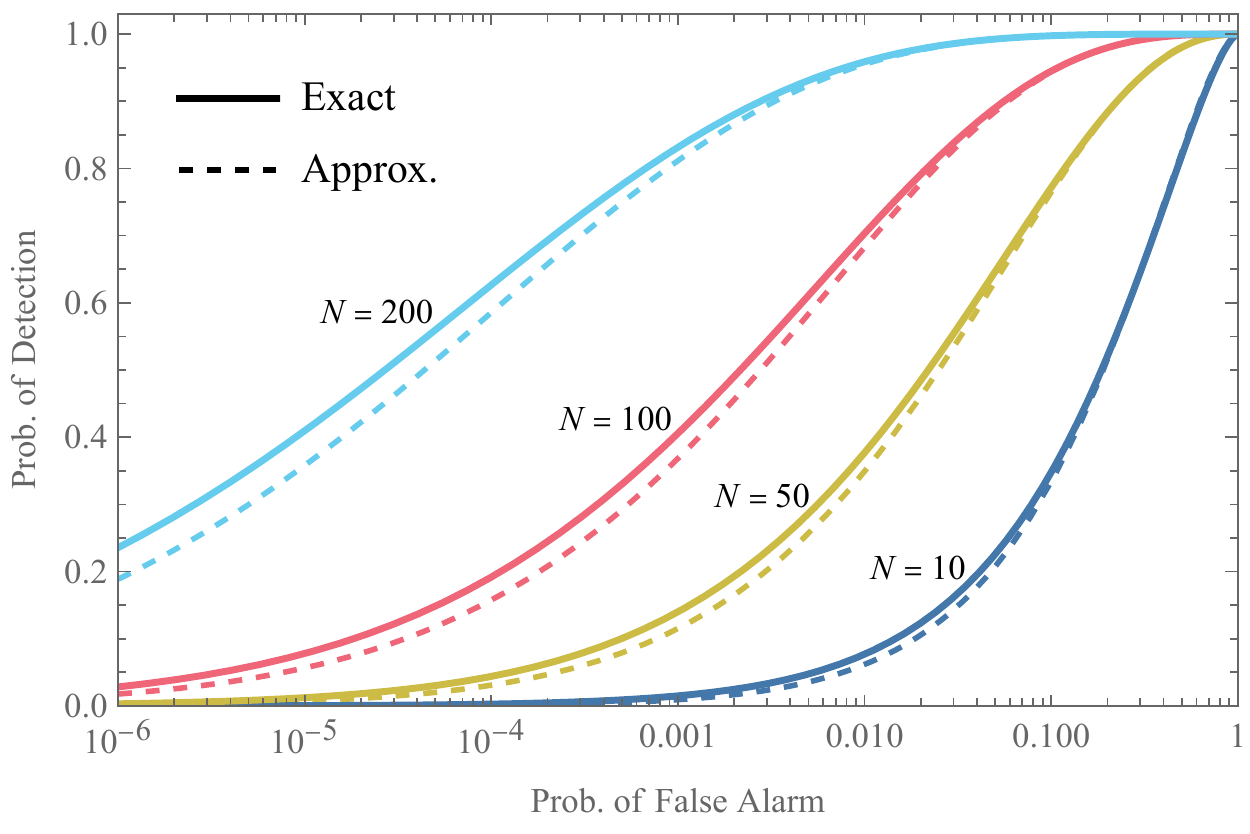}}
	\caption{ROC curves for the NP detector $\bar{D}_\kappa$ when $\kappa = \rho = 0.2$ and $N \in \{10,\linebreak[1] 50,\linebreak[1] 100, 200\}$. Dashed curves indicate approximate ROC curves calculated using \eqref{eq:ROC_approx}.}
	\label{fig:ROC_ideal_N}
\end{figure}

Figs.\ \ref{fig:ROC_ideal_rho} and \ref{fig:ROC_ideal_N} show ROC curves for the NP detector in the ideal case where $\kappa = \rho$. As discussed above, these ROC curves are optimal for any radar that operates as described in Sec. \ref{sec:nr_target_det}. Plotted together with these ROC curves are approximations calculated using \eqref{eq:ROC_approx}. It can be seen that the approximation gives a good general idea of the ROC curve behavior of $\bar{D}_\rho$. However, it is not always perfect. The approximate formula does err noticeably sometimes, especially when $\rho$ is high. However, it is much easier to calculate than the exact ROC curve, which requires numerical integration of \eqref{eq:VG_pdf}. This integration is sometimes numerically unstable, especially for large $N$.

It is interesting to note that the approximate expression \eqref{eq:ROC_approx} does not correctly reflect the optimality of the NP detector. When \eqref{eq:ROC_approx} is maximized over $\kappa$, the maximum does not occur at $\kappa = \rho$. This does not contradict Prop.\ \ref{prop:NP_optimal}, of course, because \eqref{eq:ROC_approx} is only an approximation to the ROC curve. We point this out merely as an unfortunate consequence of the assumptions used to derive \eqref{eq:ROC_approx}. It may be that this behavior is related to that seen in Fig.\ \ref{fig:pd_kappa}, where the relative decrease in $\pd$ when an incorrect $\kappa$ is chosen becomes smaller and smaller as $N$ is increased.

\section{The Low-Correlation Limit}

In \cite{luong2022performance}, we showed that $\rho$ depends on the distance $R$ from the radar to the target through the expression 
\begin{equation} \label{eq:rho_range}
	\rho(R) = \frac{\rho_0}{\sqrt{1 + (R/R_c)^4}},
\end{equation}
where $\rho_0$ is a constant and $R_c$ is a characteristic length that depends on the parameters in the radar range equation. This equation shows that, at long ranges, the correlation coefficient goes to zero. This gives us a physical motivation for considering the limit $\rho \to 0$: this limit corresponds to distant targets.

In the small-$\rho$ limit, it is reasonable to consider the detector $\bar{D}_0$, the NP detector with $\kappa = 0$. In fact, this detector was previously studied in \cite{luong2019roc,luong2020simdet} (under the name $D_1$). Moreover, the ``digital receiver'' employed in the quantum radar experiment by Barzanjeh et al.\ \cite{barzanjeh2019experimental} is equivalent to $\bar{D}_0$ in the classical limit, as we showed in \cite{luong2021when}. In the light of the results in this paper, we can also motivate $\bar{D}_0$ as being the most natural choice of detector, among all the members of the NP detector family, when $\rho$ is small. More than that, it is the locally most powerful test for small deviations from the null hypothesis $\rho = 0$---which is exactly the case that we are considering in this section. And $\bar{D}_0$ can be shown to be equivalent to the score test (Lagrange multiplier test), which is known to be locally most powerful \cite{rao2005score}.

\begin{figure}[t]
	\centerline{\includegraphics[width=\columnwidth]{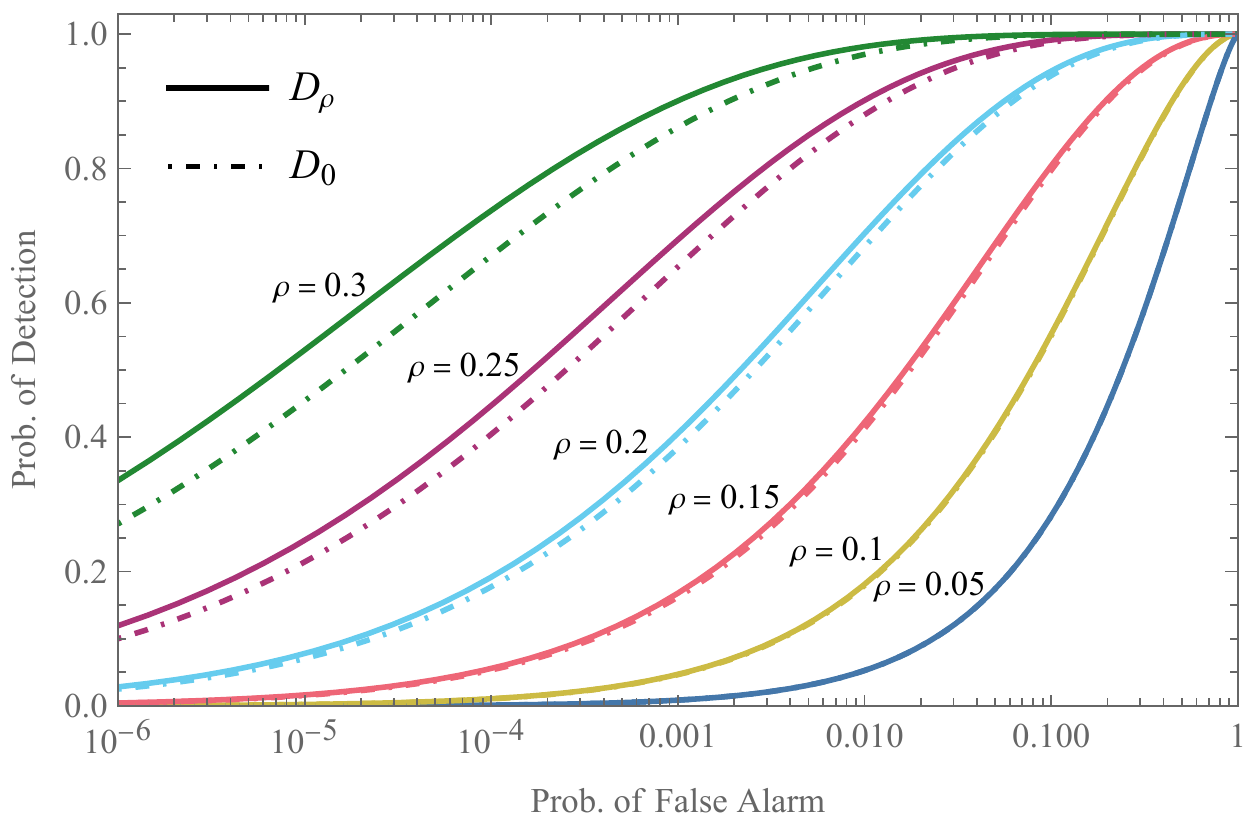}}
	\caption{Comparison of ROC curves for $\bar{D}_\rho$ (solid lines) and $\bar{D}_0$ (dot-dashed lines) when $N = 100$ and $\rho \in \linebreak[1] \{0.05, 0.1, \dots, 0.3\}$.}
	\label{fig:ROC_ideal_vs_0_rho}
\end{figure}

\begin{figure}[t]
	\centerline{\includegraphics[width=\columnwidth]{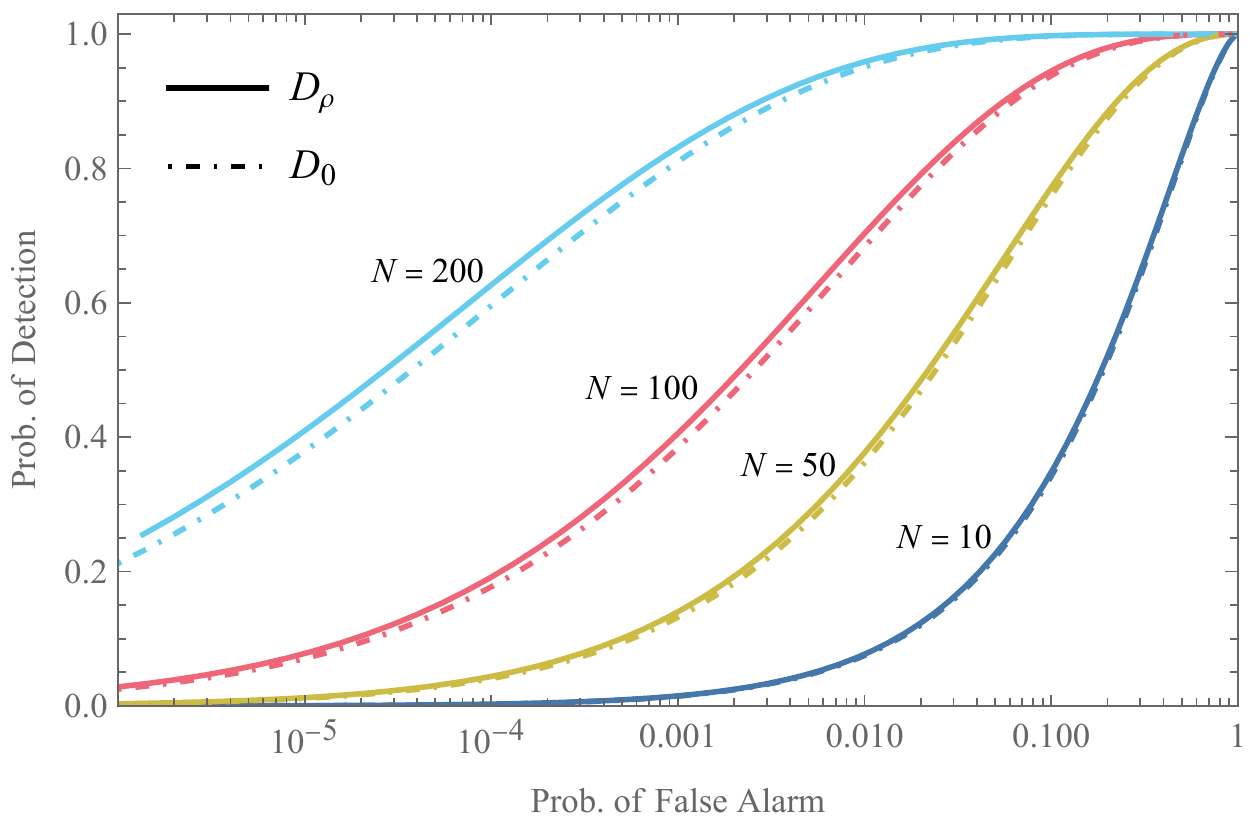}}
	\caption{Comparison of ROC curves for $\bar{D}_\rho$ (solid lines) and $\bar{D}_0$ (dot-dashed lines) when $\kappa = \rho = 0.2$ and $N \in \{10,\linebreak[1] 50,\linebreak[1] 100\}$.}
	\label{fig:ROC_ideal_vs_0_N}
\end{figure}

In Figs.\ \ref{fig:ROC_ideal_vs_0_rho} and \ref{fig:ROC_ideal_vs_0_N}, we compare the ROC curve performance of $\bar{D}_0$ with that of the optimal detector $\bar{D}_\rho$. It is evident that the difference in performance is hardly noticeable when $\rho$ is small, as expected. When $\rho$ is as large as 0.3, the difference is more noticeable, but still not very great. We may conclude that $\bar{D}_0$ is a good choice of detector, particularly when $\rho$ is small (as is usually the case). And unlike the optimal detector $\bar{D}_\rho$, it is a practical detector because it requires no foreknowledge of $\rho$. 

Recall that, in order to derive the NP detector, we made assumptions on received and reference signal powers, as well as the relative phase between them, leading to the simplified covariance matrix \eqref{eq:cov_simplified}. Since $\bar{D}_0$ could potentially be used in practice, it is of interest to derive the distribution of $\bar{D}_0$ without making these assumptions. The following proposition shows that $\bar{D}_0$ is variance-gamma distributed even when the general covariance matrix \eqref{eq:QTMS_cov} is used.

\begin{proposition}
	When the radar signals are described by the covariance matrix \eqref{eq:QTMS_cov}, the detector $\bar{D}_0$ has the distribution
	\begin{equation} \label{eq:det0_dist}
		\bar{D}_0 \sim \mathit{VG} \mleft( 0, \sigma_1\sigma_2 \sqrt{\frac{2(1-\rho^2 \cos^2 \phi)}{N}}, 2\rho\sigma_1\sigma_2 \cos \phi, \frac{1}{N} \mright).
	\end{equation}
\end{proposition}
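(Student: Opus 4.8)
The plan is to re-run the machinery from the proof of the distribution of $\bar{D}_\kappa$, but for the full covariance matrix \eqref{eq:QTMS_cov} rather than the normalized form \eqref{eq:cov_simplified}. With $\vec{x} = [I_1,Q_1,I_2,Q_2]^\T$, the single-sample statistic \eqref{eq:det0} (QTMS sign) is the quadratic form $D_0 = \vec{x}^\T\mat{A}_0\vec{x}$, where
\[
	\mat{A}_0 = \frac{1}{2}\begin{bmatrix} \mat{0} & \mat{K} \\ \mat{K} & \mat{0} \end{bmatrix}, \qquad \mat{K} = \operatorname{diag}(1,-1).
\]
Since $\vec{x}\sim\mathcal{N}(\mat{0},\mat{\Sigma})$ with $\mat{\Sigma}$ given by \eqref{eq:QTMS_cov}, the entire computation is governed by the eigenvalues of $\mat{A}_0\mat{\Sigma}$. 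Concretely, I would exhibit a whitening matrix $\mat{B}$ with $\mat{B}\mat{\Sigma}\mat{B}^\T = \mat{1}_4$ that simultaneously diagonalizes $\mat{A}_0$, so that $\bar{D}_0$ reduces to a linear combination of two independent $\chi^2_{2N}$ variates, exactly as in the earlier proof.

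Writing $\mat{\Sigma}$ in $2\times 2$ blocks as $\bigl[\begin{smallmatrix}\sigma_1^2\mat{1}_2 & \mat{P}\\ \mat{P} & \sigma_2^2\mat{1}_2\end{smallmatrix}\bigr]$ with $\mat{P} = \rho\sigma_1\sigma_2\mat{R}'(\phi)$ (using $\mat{R}'(\phi)^\T = \mat{R}'(\phi)$), one gets $2\mat{A}_0\mat{\Sigma} = \bigl[\begin{smallmatrix}\mat{K}\mat{P} & \sigma_2^2\mat{K}\\ \sigma_1^2\mat{K} & \mat{K}\mat{P}\end{smallmatrix}\bigr]$. The decisive structural fact is that the reflection matrix squares to the identity, $\mat{R}'(\phi)^2 = \mat{1}_2$; equivalently $\mat{K}\mat{P} = \rho\sigma_1\sigma_2\mat{R}(\phi)$ is a scaled rotation. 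Because every block is built from $\mat{K}$, $\mat{P}$, and scalar multiples of $\mat{1}_2$, a Schur-complement reduction collapses the $4\times 4$ characteristic determinant into a perfect square, yielding two doubly-degenerate eigenvalues
\[
	\mu_\pm = \rho\sigma_1\sigma_2\cos\phi \pm \sigma_1\sigma_2\sqrt{1 - \rho^2\sin^2\phi}.
\]
Pooling the squared whitened coordinates within each eigenspace over the $N$ samples then gives $\bar{D}_0 = \frac{\mu_+}{2N}U_1 + \frac{\mu_-}{2N}U_2$ with $U_1, U_2\sim\chi^2_{2N}$ independent.

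The rest is a characteristic-function match. A rescaled chi-squared $\frac{\mu}{2N}\chi^2_{2N}$ has characteristic function $(1 - j\mu t/N)^{-N}$, so $\varphi_{\bar{D}_0}(t) = \bigl[(1 - j\mu_+ t/N)(1 - j\mu_- t/N)\bigr]^{-N}$. Expanding the bracket, the linear coefficient is fixed by $\mu_+ + \mu_- = 2\rho\sigma_1\sigma_2\cos\phi$, which immediately gives the drift parameter $\theta = 2\rho\sigma_1\sigma_2\cos\phi$; the quadratic coefficient is fixed by the product $\mu_+\mu_-$, which determines the scale parameter $\sigma$. Comparing with the variance-gamma characteristic function \eqref{eq:VG_char_func} at $c = 0$, $\nu = 1/N$ then identifies all four parameters and establishes \eqref{eq:det0_dist}.

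The hard part will be the eigenvalue extraction under the non-simplified covariance. In the $\phi = 0$, unit-power case used earlier, $\mat{K}$ and $\mat{P}$ commute and the whitening matrix diagonalizes $\mat{A}_0$ in a manifestly symmetric way; for general $\phi$ they do \emph{not} commute, so the clean simultaneous diagonalization is no longer transparent and one must lean on the identity $\mat{R}'(\phi)^2 = \mat{1}_2$ to force the determinant to factor. The genuinely delicate quantity is the product $\mu_+\mu_-$, i.e.\ the quadratic coefficient that becomes the scale parameter $\sigma$, since this is where the $\phi$-dependence of the spread enters and where a mishandled $\sin^2\phi$ term is easiest to introduce. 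I would therefore cross-check the extracted parameters against a direct Isserlis (Wick) computation of $\expval{D_0}$ and $\var{D_0} = \var{I_1 I_2 - Q_1 Q_2}$, requiring them to reproduce the mean $c + \theta$ and variance $\sigma^2 + \theta^2\nu$ predicted by \eqref{eq:VG_mean} and \eqref{eq:VG_var}.
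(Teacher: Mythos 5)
Your route is genuinely different from the paper's: the paper does not touch the quadratic-form machinery here, but instead cites Gaunt's result that the sample mean of a product of correlated zero-mean normals is variance-gamma, applies it to $\overline{I_1 I_2}$ and $\pm\overline{Q_1 Q_2}$ separately, and then adds the two using Gaunt's convolution formula for variance-gamma variates. Your eigenvalue extraction, which you flag as the hard part, actually goes through: with $a = \rho\sigma_1\sigma_2\cos\phi - \lambda$ the characteristic determinant of $2\mat{A}_0\mat{\Sigma}$ collapses (via the Schur complement, using $\mat{K}\mat{A}\mat{K} \propto \mat{A}^{-1}$ for the $2\times 2$ rotation-type block) to $\bigl(a^2 + \rho^2\sigma_1^2\sigma_2^2\sin^2\phi - \sigma_1^2\sigma_2^2\bigr)^2$, which gives exactly your doubly degenerate $\mu_\pm$.

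The gap is in the last step, which you assert rather than compute: the parameter identification does \emph{not} land on \eqref{eq:det0_dist}. Your own eigenvalues give
\begin{equation*}
	\mu_+\mu_- = \rho^2\sigma_1^2\sigma_2^2\cos^2\phi - \sigma_1^2\sigma_2^2(1-\rho^2\sin^2\phi) = -\sigma_1^2\sigma_2^2(1-\rho^2),
\end{equation*}
so the scale parameter you obtain is $\sigma_1\sigma_2\sqrt{2(1-\rho^2)/N}$, not $\sigma_1\sigma_2\sqrt{2(1-\rho^2\cos^2\phi)/N}$; the two agree only when $\sin\phi = 0$. The Isserlis cross-check you propose would have caught this: a direct computation gives $\var{D_0} = 2\sigma_1^2\sigma_2^2(1+\rho^2\cos^2\phi) - 2\rho^2\sigma_1^2\sigma_2^2\sin^2\phi$, which matches $\mu_+^2 + \mu_-^2$ from your eigenvalues but not $\sigma^2 + \theta^2\nu$ computed from the proposition's parameters. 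The source of the mismatch is that $\overline{I_1 I_2}$ and $\overline{Q_1 Q_2}$ have covariance $\rho^2\sigma_1^2\sigma_2^2\sin^2\phi/N$ by Isserlis, so they are not independent for general $\phi$ and their variance-gamma distributions cannot simply be convolved; your whitening argument handles this dependence correctly where the product-of-normals route does not. As written, though, your proof claims to establish a statement that its own completed computation contradicts, and you must either carry the computation to its actual conclusion or explain the discrepancy.
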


\begin{proof}
	In \cite{gaunt2018note}, Gaunt showed that the product of two zero-mean normal random variables is variance-gamma distributed, as is the sample mean of such a product; see also \cite{nadarajah2016distribution}. (Note that Gaunt used a different parameterization of the variance-gamma distribution.) This exactly fits the case of $I_1 I_2$ and $Q_1 Q_2$. Using the fact that the Pearson correlation coefficient between $I_1$ and $I_2$ is $\rho \cos \phi$, \cite[Theorem 1]{gaunt2018note} yields
	\begin{equation}
		\overline{I_1 I_2} \sim \mathit{VG} \mleft( 0, \sigma_1\sigma_2 \sqrt{\frac{1-\rho^2 \cos^2 \phi}{N}}, \rho\sigma_1\sigma_2 \cos \phi, \frac{2}{N} \mright).
	\end{equation}
	By inspection of \eqref{eq:QTMS_cov}, we see that $\pm\overline{Q_1 Q_2}$ is distributed identically to $\overline{I_1 I_2}$, assuming that the same sign as \eqref{eq:det0} is chosen. Then \eqref{eq:det0_dist} follows upon applying \cite[Eq.\ 4]{gaunt2018note}, which gives the distribution of the sum of variance-gamma variates.
\end{proof}

%Note that the parameterization used in \cite{gaunt2018note} is different from that used in Def.\ \ref{def:VG}. The parameterization $\mathit{VG}'(r', \theta', \sigma', \mu')$ used by \cite{gaunt2018note} is related to that in Def.\ \ref{def:VG} by
%%
%\begin{equation}
%	\mathit{VG}'(r', \theta', \sigma', \mu') = \mathit{VG} \mleft( \mu', \sqrt{r'}\sigma', r'\theta', \frac{2}{r'} \mright).
%\end{equation}

In the same vein as Prop.\ \ref{prop:ROC_approx}, we can exploit the central limit theorem to give an approximate expression for the ROC curve of $\bar{D}_0$.

\begin{proposition} \label{prop:ROC_0_approx}
	In the limit of large $N$, the ROC curve for $\bar{D}_0$ is approximately
	\begin{equation} \label{eq:ROC_0_approx}
		p_\mathit{d}(p_\mathit{fa}) = \frac{1}{2} \erfc \mleft( \frac{ \erfc^{-1}(2 p_\mathit{fa}) - \sqrt{N} \rho \cos \phi }{ \sqrt{1 + \rho^2 \cos^2 \phi} } \mright).
	\end{equation}
\end{proposition}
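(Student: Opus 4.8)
The plan is to follow the same template as the proof of Proposition~\ref{prop:ROC_approx}, now starting from the variance-gamma law \eqref{eq:det0_dist} rather than from the simplified one. Since $\bar{D}_0$ is by construction the sample mean of $N$ i.i.d.\ terms (each of the form $I_1 I_2 \pm Q_1 Q_2$ at a single time index), the central limit theorem applies and lets me replace the variance-gamma distribution by a Gaussian whose mean and variance I read off directly from \eqref{eq:VG_mean} and \eqref{eq:VG_var}.

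First I would compute the two moments. From \eqref{eq:det0_dist} the mean is $\expval{\bar{D}_0} = 2\rho\sigma_1\sigma_2\cos\phi$, which is immediate since the location parameter vanishes. The variance requires combining the $\sigma^2$ and $\theta^2\nu$ contributions; here a small simplification occurs, because the $-2\rho^2\cos^2\phi$ sitting inside $\sigma^2$ and the $+4\rho^2\cos^2\phi$ coming from $\theta^2\nu$ partially cancel, leaving $\var{\bar{D}_0} = 2\sigma_1^2\sigma_2^2(1+\rho^2\cos^2\phi)/N$. Thus in the large-$N$ limit $\bar{D}_0 \sim \mathcal{N}\mleft(2\rho\sigma_1\sigma_2\cos\phi,\, 2\sigma_1^2\sigma_2^2(1+\rho^2\cos^2\phi)/N\mright)$.

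Next I would carry out the standard threshold bookkeeping. Under $H_0$ I set $\rho = 0$, so the mean vanishes and the variance collapses to $2\sigma_1^2\sigma_2^2/N$; writing $\pfa$ as the upper-tail probability past a threshold $T$ and inverting the complementary error function gives $T = (2\sigma_1\sigma_2/\sqrt{N})\,\erfc^{-1}(2\pfa)$. Then I would write $\pd$ as the corresponding upper-tail probability under $H_1$, using the full mean and variance above, and substitute this expression for $T$. The common factor $2\sigma_1\sigma_2$ cancels between numerator and denominator, the remaining $1/\sqrt{N}$ factors combine into a single $\sqrt{N}\rho\cos\phi$ term, and the result collapses to \eqref{eq:ROC_0_approx}.

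I do not expect any conceptual obstacle, as the hard work---establishing that $\bar{D}_0$ is variance-gamma under the general covariance matrix---was already done in the preceding proposition. The only place demanding care is the variance computation, where one must track the signs of the $\rho^2\cos^2\phi$ terms so that the cancellation is performed correctly; a slip there would propagate into the denominator of \eqref{eq:ROC_0_approx} and spoil the final form. The cancellation of the power factors $\sigma_1\sigma_2$ in the last step is routine but worth verifying explicitly, since it is precisely what makes the approximate ROC curve independent of the absolute signal powers, depending only on $\rho$, $\phi$, $N$, and $\pfa$.
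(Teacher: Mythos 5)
Your proposal is correct and follows exactly the route the paper intends: its own proof of this proposition is the single line ``essentially the same as that of Prop.~\ref{prop:ROC_approx},'' and you have simply carried out that template starting from \eqref{eq:det0_dist}, with the mean, the variance cancellation yielding $2\sigma_1^2\sigma_2^2(1+\rho^2\cos^2\phi)/N$, the threshold $T=(2\sigma_1\sigma_2/\sqrt{N})\erfc^{-1}(2\pfa)$, and the final cancellation of $\sigma_1\sigma_2$ all checking out. No gaps.
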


\begin{proof}
	The proof is essentially the same as that of Prop.\ \ref{prop:ROC_approx}.
\end{proof}

Although the detection performance does not depend on $\sigma_1$ or $\sigma_2$, it does depend on $\phi$. Thus, $\bar{D}_0$ is a phase-dependent detector: its detection performance when $\phi \neq 0$ is equivalent to reducing $\rho$ to $\rho \cos \phi$. There are situations (e.g.\ in the biomedical realm) where the phase can potentially be known beforehand, so the practicality of $\bar{D}_0$ is not fully ruled out. For a more thorough discussion of these issues, we refer the reader to  \cite{luong2022likelihood}.

\section{Conclusion}

In this paper, we derived a family of detectors for QTMS and noise radars based on the Neyman-Pearson lemma. This family of detectors is parameterized by a single parameter $\kappa$ such that, when $\kappa$ equals the correlation coefficient $\rho$, the resulting detector is optimal. Although the simple hypotheses \eqref{eq:hypotheses_simple} used to derive the detectors are not applicable in practice, they allow us to set an upper bound on the ROC curve of any radar satisfying the assumptions made in Sec.\ \ref{sec:nr_target_det}. Our work thus furnishes a natural benchmark against which any other detector can be compared.

Since the correlation coefficient $\rho$ is often small, it is reasonable to ask what happens in the $\rho \to 0$ limit. We found that $\bar{D}_0$, the NP detector when $\kappa = 0$, is a good detector in this limit. This is not surprising, since it also happens to be the locally most powerful test for values of $\rho$ close to 0. Unlike the optimal NP detector, which requires $\rho$ to be known beforehand, $\bar{D}_0$ requires no such foreknowledge and is therefore a practical detector. As a matter of fact, $\bar{D}_0$ has already been used in actual experiments; the work in this paper provides an additional motivation for using this detector (at least when phase is not an issue).

One of the motivations for studying the simple hypotheses \eqref{eq:hypotheses_simple}, apart from deriving an upper bound on detection performance, is that some of the early QI literature implicitly assumes that $\rho$ is fixed at the known value $\kappa$. Strictly speaking, this is incorrect; the hypotheses \eqref{eq:hypotheses} should have been used instead. In this paper, however, we saw that an incorrect choice of $\kappa$ does not necessarily affect the detection performance very gravely. However, our work does not fully apply to standard quantum illumination radars, which use a complicated quantum-optical receiver instead of heterodyne receivers. In future work, it would be interesting to see whether our conclusions can be extended to standard quantum illumination.

\section*{Acknowledgment}

This work was supported by the Natural Science and Engineering Research Council of Canada (NSERC). D.\ Luong also acknowledges the support of a Vanier Canada Graduate Scholarship.

\bibliographystyle{ieeetran}
\bibliography{qradar_refs,own_refs}
	
\end{document}